\newcommand{\papertitle}{Statistical analysis of randomized benchmarking}
\newcommand{\rv}{\rvert}
\newcommand{\lv}{\lvert}
\newcommand{\tV}{\vert\kern-0.25ex\vert\kern-0.25ex\vert}
\newcommand{\ct}{\ensuremath{^{\dagger}}}
\DeclareMathOperator{\tr}{Tr}
\newcommand{\cB}{\ensuremath{\mathcal{B}}}
\newcommand{\cN}{\ensuremath{\mathcal{N}}}
\newcommand{\bbE}{\ensuremath{\mathbb{E}}}
\newcommand{\bbS}{\ensuremath{\mathbb{S}}}
\newcommand{\bbV}{\ensuremath{\mathbb{V}}}
\newcommand{\bi}{\begin{itemize}}
\newcommand{\ei}{\end{itemize}}
\theoremstyle{plain}% default
\newtheorem{thm}{Theorem}
\newtheorem{lem}[thm]{Lemma}
\theoremstyle{definition}
\theoremstyle{remark}
\definecolor{nblue}{rgb}{0.2,0.2,0.7}
\definecolor{ngreen}{rgb}{0.1,0.5,0.1}%272
\definecolor{nred}{rgb}{0.8,0.2,0.2}%711
\definecolor{nblack}{rgb}{0,0,0}
\newcommand{\hide}[1]{}
\newcommand{\bb}{\mathbb}
\newcommand{\dd}{\ensuremath{\delta\!}}
\newcommand{\unit}{\mathbbm{1}}
\newcommand{\subref}[2]{{\cref{#1}\hyperref[#1]{#2}}}
\begin{document}
%------------------------------------------------------------------------------------------------------------%
\title{\papertitle}

\author{Robin Harper}
\affiliation{Centre for Engineered Quantum Systems, School of Physics, The
University of Sydney, Sydney, Australia}
\author{Ian Hincks}
\affiliation{Institute for Quantum Computing and Department of Applied
Mathematics, University of Waterloo, Waterloo, Ontario N2L 3G1, Canada}
\affiliation{Quantum Benchmark Inc., Kitchener, ON N2H4C3, Canada}
\author{Chris Ferrie}
\affiliation{Centre for Quantum Software and Information, University of Technology Sydney, Australia}
\author{Steven T. Flammia}
\affiliation{Centre for Engineered Quantum Systems, School of Physics, The
University of Sydney, Sydney, Australia}
\affiliation{Yale Quantum Institute, Yale University, New Haven, CT 06520, USA}
\author{Joel J. Wallman}
\affiliation{Institute for Quantum Computing and Department of Applied
Mathematics, University of Waterloo, Waterloo, Ontario N2L 3G1, Canada}

\date{\today}

\begin{abstract}
Randomized benchmarking and variants thereof, which we collectively call RB+, are widely used to characterize the performance of quantum computers because they are simple, scalable, and robust to state-preparation and measurement errors.
However, experimental implementations of RB+ allocate resources suboptimally and make ad-hoc assumptions that undermine the reliability of the data analysis.
In this paper, we propose a simple modification of RB+ which rigorously eliminates a nuisance parameter and simplifies the experimental design.
We then show that, with this modification and specific experimental choices, RB+ efficiently provides estimates of error rates with multiplicative precision.
Finally, we provide a simplified rigorous method for obtaining credible regions for parameters of interest and a heuristic approximation for these intervals that performs well in currently relevant regimes.
\end{abstract}

\maketitle

%-----------------------------------------------
\section{Introduction}
%-----------------------------------------------

Characterizing large scale quantum devices is a prerequisite to optimizing their performance and being able to reliably perform useful information processing tasks.
Full characterization is manifestly not scalable for general errors, so that
scalable methods can only partially characterize the noise.
Currently, the only fully scalable protocols that partially characterize quantum devices are randomized benchmarking~\cite{Knill2008, Magesan2011,Magesan2012a} and variants thereof (RB+)~\cite{Magesan2012, Carignan-Dugas2015, Wallman2015, Wallman2015b, Sheldon2016, Cross2016, Harper2017}.
This family of protocols can provide a wide variety of information about noise parameters, including the average error rate \cite{Knill2008, Magesan2011, Magesan2012a, Wallman2017}, error rates for specific gates \cite{Magesan2012, Carignan-Dugas2015, Cross2016, Harper2017}, leakage rates \cite{Wallman2016}, loss rates \cite{Wallman2015b, Wallman2017}, and the amount of residual unitary (calibration) errors \cite{Wallman2015, Sheldon2016,Yang2018}.

RB+ provides estimates of noise parameters by applying long sequences of random gates to amplify errors in the implementation of gates and estimate them independently from state preparation and measurement errors (SPAM).
Typically, descriptions of RB+ state that experiments should be repeated to obtain a desired precision without necessarily specifying (or recommending) any of the following: (1) estimators for finite data; (2) how many repetitions should be performed; or (3) how finite and heteroscedastic data should be fit to a specified model.
This last point is important because RB+ data are generally heteroscedastic, meaning that the variance across the data is non-uniform, since the variance over random sequences increases with the sequence length~\cite{Wallman2014}.
Abstaining from specifics on these points was perhaps warranted by the fact that particular choices are difficult to derive or justify as being optimal, or nearly optimal.
Obtaining a fully general and optimal specification is confounded by the unknown distribution of errors over the random sequence of gates~\cite{Wallman2014, Helsen2017}.
However, in one experimental regime, Bayesian techniques can be applied to obtain rigorous credible intervals for the model parameters as well as efficient allocation of experimental measurements~\cite{Granade2014, Granade2017}.
We discuss and utilize this work by Granade et al.\ in \cref{sec:arb}.

In this paper, we present a minor modification of RB+ that improves the efficiency of the method by eliminating a nuisance model parameter, yet it adds no experimental overhead.
Similar methods have been presented previously in the literature for the case of single-qubit RB~\cite{Muhonen2015, Fogarty2015}.
For the two remaining model parameters, we then provide estimators which do not have to be weighted to correct for heteroscedasticity because they can be estimated from two independent sequence lengths.
Without our modification, at least 3 sequence lengths are required, which in turn require a weighted fit where the correct weights are not generally inferable from the data.
We then study the distribution of the parameter estimators and show how to obtain simple and rigorous credible intervals in the regime studied in Ref.~\cite{Granade2014}, that is, when each random sequence is repeated once.
Finally, we also provide a simple proof that certain experimental design choices enable RB+ to efficiently provide estimates of error rates that have multiplicative precision.
By showing that the estimates of such error rates have multiplicative precision, we confirm that RB+ will continue to allow efficient estimation of the model parameters as gate fidelity rates improve through the simple expedient of increased sequence lengths.

In what follows, we use the notation that $\hat{x}$ is an estimator of a quantity $\bar{x}$, where the bar denotes that either an expected value or a sample average has been taken over realizations of a random variable $x$.

%-----------------------------------------------
\section{RB+ protocol}\label{sec:RBProtocol}
%-----------------------------------------------

We begin by providing a general framework that describes all existing RB+ protocols except the leakage protocol of Ref.~\cite{Wood2018} and the unitarity protocol of Ref.~\cite{Wallman2015}.
We will also present a modified version of the unitarity protocol.
We exclude the leakage protocol of Ref.~\cite{Wood2018} because it has a more complicated fit model that is not robust to SPAM errors.

RB+ protocols are of the following form.

\vspace{6pt}
\fbox{\parbox{\textwidth}{\parbox{0.95\textwidth}{
\begin{enumerate}
\item Choose a positive integer $m$.

\item Choose a random sequence of gates $s$ from a set $\bbS_m$, typically of Clifford gates.
Note that these gates are often chosen to leave a state invariant.
However, as discussed below, uniformly choosing $s$ to either leave the state invariant or map it to an orthogonal state eliminates a nuisance model parameter.

\item Obtain an estimate $\hat{q}(m,s)$ of the expectation value $q(m,s)$ of an observable $E$ after preparing a state $\rho$ and applying the gates in $s$.
Typically $\rho$ should be close to an ideal computational basis state and $E$ should be close to a projector onto a pure state in the computational basis.

\item Repeat steps 2--3 $k_m$ times to obtain an estimate $\hat{q}(m)$ of $\bar{q}(m) = \lv\bbS_m\rv^{-1} \sum_{s\in\bbS_m} q(m,s)$.

\item Repeat steps 1--4 and fit to the model
\begin{align}\label{eq:decay_model}
\bar{q}(m) = A p^m + B
\end{align}
where $p$ is related to some parameter of interest (e.g., the average gate fidelity to the identity) and $A$ and $B$ are SPAM-dependent constants.
\end{enumerate}}}}

\vspace{6pt}

We assume throughout that $A\gg 0$, which holds in current regions of interest, as otherwise it is unclear how to efficiently gather useful statistics.

\subsection{Unitarity}

We now introduce a slight variant of the RB+ protocol to handle the special case of the unitarity protocol from Ref.~\cite{Wallman2015}.
The variant enables an independent estimate of the unitarity (which quantifies how coherent the errors are) and the leakage rate~\cite{Wallman2015b,Wallman2016}.
Note that the following protocol is not strictly scalable as it involves sampling every Pauli matrix and also assumes that there is no (or minimal) state-dependent loss.
The scalability could be improved by, for example, performing importance sampling of the Pauli matrices conditioned on the sequence~\cite{Flammia2011}; however, we leave this as an open problem.

\begin{enumerate}
\item Choose a positive integer $m$.

\item Choose a random sequence of $m$ $n$-qubit Clifford gates $s$.

\item For each $n$-qubit Pauli matrix $P$, obtain an estimate $\hat{q}(m,s|P)$ of the expectation value of the observable $P$ after preparing a fixed state $\rho$ and applying the gates in $s$.

\item Repeat steps 2--3 $k_m$ times. For each $n$-qubit Pauli matrix $P\neq I$,
set
\begin{align}
\begin{split}
	\hat{a}(m|P) &= \sum_{s} \hat{q}(m,s|P)/k_m \\
	\hat{a}(m) &= \frac{1}{4^n - 1}\sum_P \hat{a}(m|P) \\
	\hat{b}(m) &= \frac{1}{k_m}\sum_{P,s} \hat{q}(m,s|P)^2 - \hat{a}(m|P)^2.
\end{split}
\end{align}

\item Repeat steps 1--4 and fit to the models
\begin{align}\label{eq:decay_model2}
\begin{split}
	\bar{a}(m) &= A l^m \\
	\bar{b}(m) &= A' u^m
\end{split}
\end{align}
where $l$ and $u$ are the leakage rate~\cite{Wallman2016} and the
unitarity~\cite{Wallman2015} respectively.
\end{enumerate}

Unlike other protocols, the combined unitarity/loss protocol does not require any truncation of $\bar{q}(m)$ to avoid negative values. We also note that  recently an alternative protocol has been proposed which proposes a method for efficient unitarity benchmarking in the regime of few-qubit Clifford gates~\cite{Dirkse2018}.

\subsection{Eliminating the offset}{\label{sec:elimating_offset}

While superficially benign, the variable offset $B$ in \cref{eq:decay_model} can severely increase the marginal uncertainty in $p$, the parameter of interest~\cite{Muhonen2015}.
We now present a method of rigorously and exactly eliminating this constant offset without having to estimate its value.

First note that for gate independent noise $\Lambda$, the constant
\begin{align}
B:=\tr\bigl[E\Lambda(\unit/2^n)\bigr]
\end{align}
and the decay parameter $p$~\cite{Magesan2012a,Wallman2017} do not change if we compile any gate into the sequence.
(In fact, strictly this holds even for non-trace-preserving noise where $B$ is multiplied by a second exponential).
In particular, let $X$ be any gate that maps the input state to an orthogonal state, as the single-qubit Pauli $X$ operator does for states in the computational basis.
Let $\bbS_{m,b}$ be the set of sequences obtained from $\bbS_m$ by compiling $X^b$ into the sequence and let
\begin{align}
	\hat{q}(m|b) = \frac{1}{\lv\bbS_{m,b}\rv}\sum_{s\in\bbS_{m,b}} q(m,s).
\end{align}
Then we have
\begin{align}
\label{eq:diffDecay}
	\bar{q}(m) = \bar{q}(m|0) - \bar{q}(m|1) = Ap^m
\end{align}
where now $A\in[0,1]$.
A similar idea was suggested for single qubits in \cite{Muhonen2015, Fogarty2015}.
One disadvantage of this approach is that the remaining $A$ coefficient in \cref{eq:decay_model} may be small for some values of $b$, especially for multiple qubits, thus reducing the signal from some experiments.

Alternatively, consider an $n$-qubit POVM $\{E_1,\ldots, E_k\}$ and suppose that a set of gates $\{X_1, \ldots, X_k\}$ are such that $E_j \approx X_j E_1 X_j\ct$.
Then by compiling $X_j$ into the sequence uniformly at random and recording the probability of observing the corresponding $E_j$ and averaging over $j$, the average value of $B$ becomes
\begin{align}
B = \frac{1}{k}\sum_{j=1}^k\tr\bigl[E_j\Lambda(\tfrac{\unit}{2^n})\bigr]
 = \frac{1}{k}\tr\bigl[\unit\Lambda(\tfrac{\unit}{2^n})\bigr]
 = \frac{1}{k}\,.
\end{align}

\section{Estimating the decay rate}

With a known value of $B$, \cref{eq:decay_model,eq:decay_model2} have two unknown parameters and so we need at least two values of $m$, denoted $m_1 < m_2$, to estimate either (or both) parameters.
Alternatively, we could use a single value of $m$ if we are content to accept a lower bound on $p$ (by assuming that $A = 1$, and that other model assumptions are respected), and this may be sufficient for certain purposes.

From \cref{eq:decay_model},
\begin{align}
\label{eq:LSestimator}
\begin{split}
A &= \left[\bar{q}(m_1)-B\right]^{m_2/\dd m} \left[\bar{q}(m_2)-B\right]^{-m_1/\dd m},  \\
p &= \left[\bar{q}(m_1)-B\right]^{-1/\dd m} \left[\bar{q}(m_2)-B\right]^{1/\dd m}.
\end{split}
\end{align}
Each of these terms is of the form $x_1^{\alpha_1} x_2^{\alpha_2}$ where
$x_j^\alpha = \left[\bar{q}(m_j)-B\right]^\alpha$.
A natural approach would be to estimate $x_j^\alpha$ by $\left[\hat{q}(m_j)-B\right]^\alpha$, where $\hat{q}(m_j)$ is an unbiased estimator for $\bar{q}(m_j)$ (e.g., the sample mean).
However, this approach has two issues.
First, the estimator is complex or undefined if $\hat{q}(m_j)\leq B$.
To address this issue, we can truncate $\bar{q}(m_j)$ to $B+\delta$ for some fixed $0<\delta\ll 1$, which will occur with negligible probability provided enough sequences are taken and the $\hat{q}(m_i)$ are sufficiently far from $0$.

A second issue is that if $\hat{q}(m_j)$ is an unbiased estimator for $\bar{q}(m_j)$, then $\left[\hat{q}(m_j)-B\right]^\alpha$ is a biased estimator of $x_j^\alpha$ for any $\alpha\neq 1$.
To address this second issue, we can estimate and then substract the bias if necessary.
Repeating steps 2--4 for a fixed $m$ and using final gates compiled in to remove the offset $B$ yields an estimate $\hat{q}(m_j) -B = x_j(1 + \epsilon_j)$ for some random variable $\epsilon_j$ with zero mean.
We then have
\begin{align}\label{eq:estimatorBias}
\bbE\left[\hat{q}(m_j)-B\right]^\alpha
&= x_j^\alpha\bbE(1 + \epsilon_j)^\alpha \notag\\
&= x_j^\alpha\left[1 + \tfrac{1}{2}\alpha(\alpha-1)\bbE \epsilon_j^2 + O(\bbE \alpha\epsilon^3)\right] \notag\\
&\approx x_j^\alpha + \tfrac{1}{2}\alpha(\alpha-1)x_j^{\alpha-2}\bbV \hat{q}(m_j).
\end{align}
Therefore $\left[\hat{q}(m_j)-B\right]^\alpha$ is biased but consistent and so the bias can be neglected when sufficiently many sequences are sampled at each $m_j$.
Moreover, the bias is modulated by $\alpha$, which as we prove below in \cref{sec:rigproof}, is $O(r)$ in the optimal regime for $p$.
For small numbers of sampled sequences, the bias term can be subtracted using sample estimates of $x_j^\alpha$ and $\bbV\hat{q}(m_j)$ on the right-hand-side of \cref{eq:estimatorBias}.
As our numerics will show in \cref{sec:arb}, the bias is negligible for intermediate numbers of measurements but is noticeable for very small numbers of sequences.

To determine approximately optimal values of $m_1$ and $m_2$ assuming that the bias in \cref{eq:estimatorBias} is negligible, note that
\begin{align}\label{eq:epsilonVar}
\bbV\left[\hat{q}(m_j)-B\right]^\alpha
&= x_j^{2\alpha-2} \alpha^2 \bbV[\hat{q}(m_j)] + O(\bbE \alpha^3\epsilon^3).
\end{align}

Now we note that by Chebyshev's inequality,
\begin{align}
	\Pr\Bigl(\lv \hat{p} - p\rv > k \sqrt{\bb V(\hat{p})}\Bigr) \leq k^{-2},
\end{align}
so that with probability 8/9 (for example) and using the trivial bound $p \le 1$, we have
\begin{align}\label{eq:estimatorError}
\lv\hat{p} - p\rv \leq \frac{3}{\dd m}\biggl(\sum_j \bbV[\hat{q}(m_j)] \biggr)^{1/2} + O\bigl(\bbE(\epsilon_j^{3/2})/\dd m^{3/2}\bigr).
\end{align}
The only unknowns in \cref{eq:estimatorError} are the variances at the two
sequence lengths.
Choosing $\dd m \approx 1/(1-p)$ therefore gives a multiplicative precision
estimate of the error rate $1-p$, as claimed.

We would like to make the error term as small as possible.
To achieve a large $\dd m$, and hence a small error, we want $m_1$ as small as possible and $m_2$ as large as possible.
However, \cref{eq:decay_model} is only typically accurate for $m\geq 4$~\cite{Wallman2017}, so we henceforth set $m_1=4$.
Furthermore, the number of sequences required to make the truncation probability negligible increases with $m_2$, and \cref{eq:epsilonVar} is inversely proportional to $x_j$ for $\alpha\leq 1$, so $m_2$ can only be increased to some fixed value.
Thus, when sampling to some fixed constant accuracy we must not choose $m_2$ to be too large.

Ref.~\cite{Granade2014} recommended $m_2 = \lceil 1/(1-p) \rceil$ as the optimal choice of $m_2$, however, this was for $B=0$ (i.e., the infinite-dimensional limit).
Numerically, we observe that $m_2 =\lceil 1/[2(1-p)] \rceil$ results in a more precise estimate.

Consequently, provided gate lengths ($m$) can be increased as specified above, the number of sequences required to determine $1-p$ to within a specified factor remain approximately independent of $p$.
This demonstrates that RB+ protocols scale favorably with the error rate.

This derivation is not quite rigorous only for a very trivial reason, namely because our use of Taylor's theorem requires that we control the smoothness of the functions being expanded over some region, and this region should also be appropriately defined.
These expressions are nonetheless useful for practical analysis of RB+.
By contrast, the derivation in \cref{sec:rigproof} is completely rigorous, but the proof is not meant to provide anything more than coarse guidance about how to choose parameter settings in practical situations.

%-----------------------------------------------
\section{Accelerated RB}\label{sec:arb}
%------------------------------------------------

We now analyze accelerated randomized benchmarking (ARB)~\cite{Granade2014} using the modified protocol discussed in \cref{sec:elimating_offset}.
In ARB, each individual estimate $\hat{q}(m,s)$ is in $\{0,1\}$, that is, each random sequence is measured once.
Therefore, $\hat{q}(m)\sim \cB(k_m, \bar{q}(m))/k_m$, where $\cB(n,p)$ denotes the binomial distribution with $n$ trials and probability $p$.

For sufficiently many samples, log ratios of binomial variables are approximately normally distributed~\cite{Katz1978}, so that
\begin{align}
\begin{split}
\log\tfrac{\hat{q}(m_2)}{\hat{q}(m_1)} &\sim \cN\left(\log(\tfrac{\bar{q}(m_2)}{\bar{q}(m_1)}), \sigma^2\right)\\
\sigma^2 &= \sum_j\frac{\overline{q}(m_j)\left(1-\overline{q}(m_j)\right)}{k_j(\overline{q}(m_j)-B)^2}.
\end{split}
\end{align}
Therefore
\begin{align}\label{eq:normalARB}
\log\hat{p} = \frac{1}{\dd m}\log\tfrac{\hat{q}(m_2)}{\hat{q}(m_1)} \sim  \cN(\log p, \sigma^2/\dd m^2).
\end{align}

In \cref{fig:cdf}, we illustrate that the normal approximation is sufficiently accurate by comparing the exact cumulative density function for the estimator $\hat{p}$ from binomial statistics with the normal approximation of \cref{eq:normalARB} for multiple values of $A$, $B$, and $p$.
Note in particular that the shape of the cumulative density function for the estimator $\hat{p}$ is essentially independent of $p$, but has heavier tails for smaller values of $A$.

\begin{figure}
\includegraphics[width=0.9\linewidth]{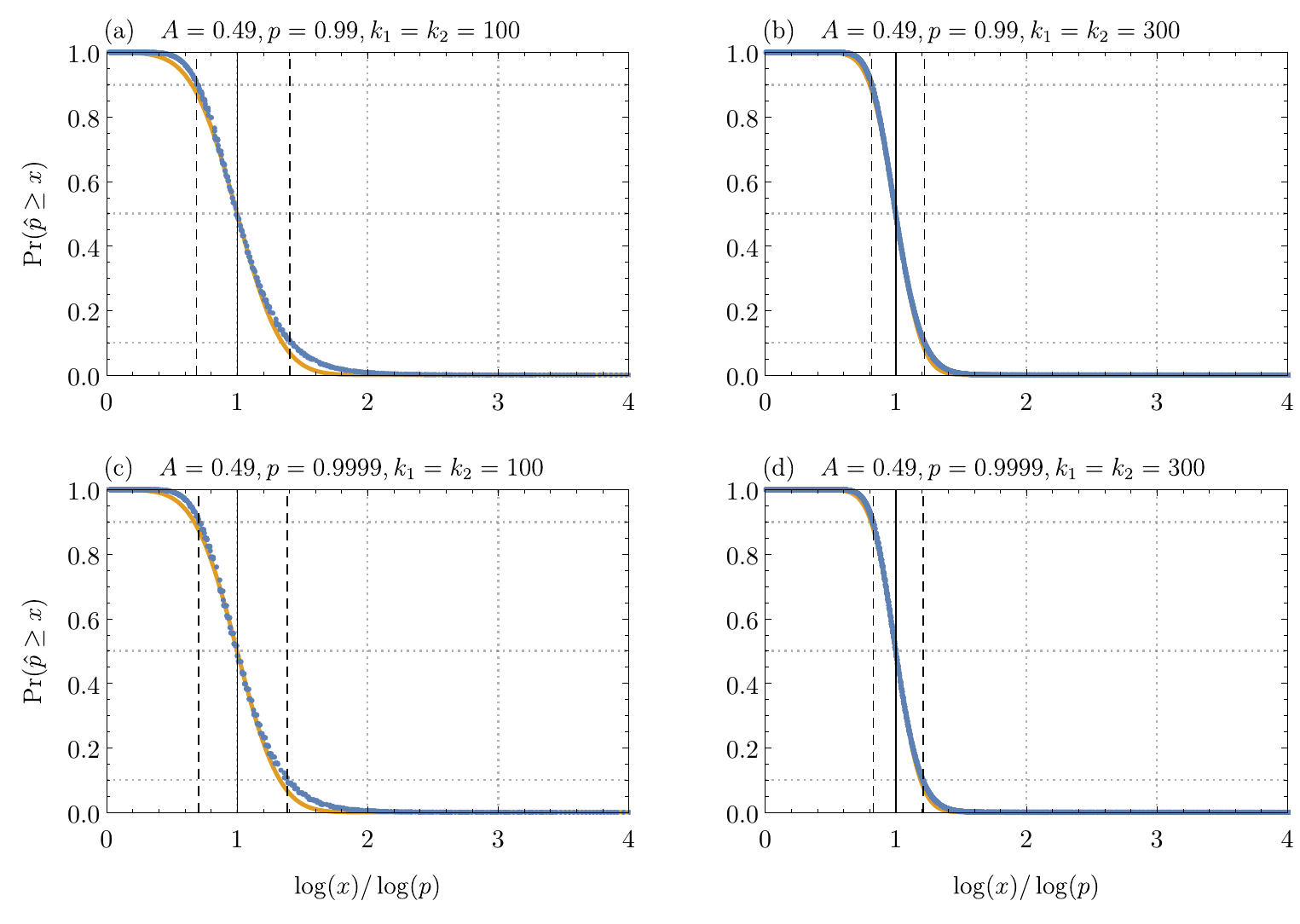}
\caption{Numerical demonstration that error rates can be accurately estimated with multiplicative precision using minimal resources. We plot the cumulative density function (CDF) for the estimator $\hat{p}$ from binomial statistics (blue dots) and \cref{eq:normalARB} (solid orange line) for different values of $A$, $p$ and $k$.
Note that the shape is essentially independent of $p$ but that smaller values of $A$ (that is, larger state-preparation and measurement errors) result in heavier tails.
Vertical dashed lines are located at the 10\% and 90\% quantiles of the CDF of the binomial distribution.}
\label{fig:cdf}
\end{figure}

Under the log-normal approximation for $\hat{p}$, the value of $m_2$ that minimizes the variance of the estimate is given by
\begin{align}
	\underset{m_2}{\operatorname{argmin}}\left(
	\log \left[
		p^{-2 m_1} \overline{q}(m_1) \left(1-\overline{q}(m_1)\right)
		+ p^{-2 m_2} \overline{q}(m_2) \left(1-\overline{q}(m_2)\right)
	\right]
	-2 \log (m_2-m_1)
	\right).
	\label{eq:bestm2}
\end{align}
While this minimizes the variance of $\log \hat{p}$ rather than
$\hat{p}$, since $p$ near $1$, we have $\bbV(\log\hat{p})\approx\bbV(\hat{p})$.
This minimization can be performed numerically using an initial value of $-1/\log p$. In \cref{fig:choosingm2}, we plot the variance and optimal $m_2$ values in several relevant parameter regimes.
The optimal value of $m_2$ depends on the true values of $A$, $B$, and $p$.
In \subref{fig:choosingm2}{(a-b)} we see that when choosing a future experiment based on present knowledge with multiplicative uncertainty in $p$, it is best to err on the side of $m_2$ that is short with respect to the optimal value.

\begin{figure}
\includegraphics[width=0.9\linewidth]{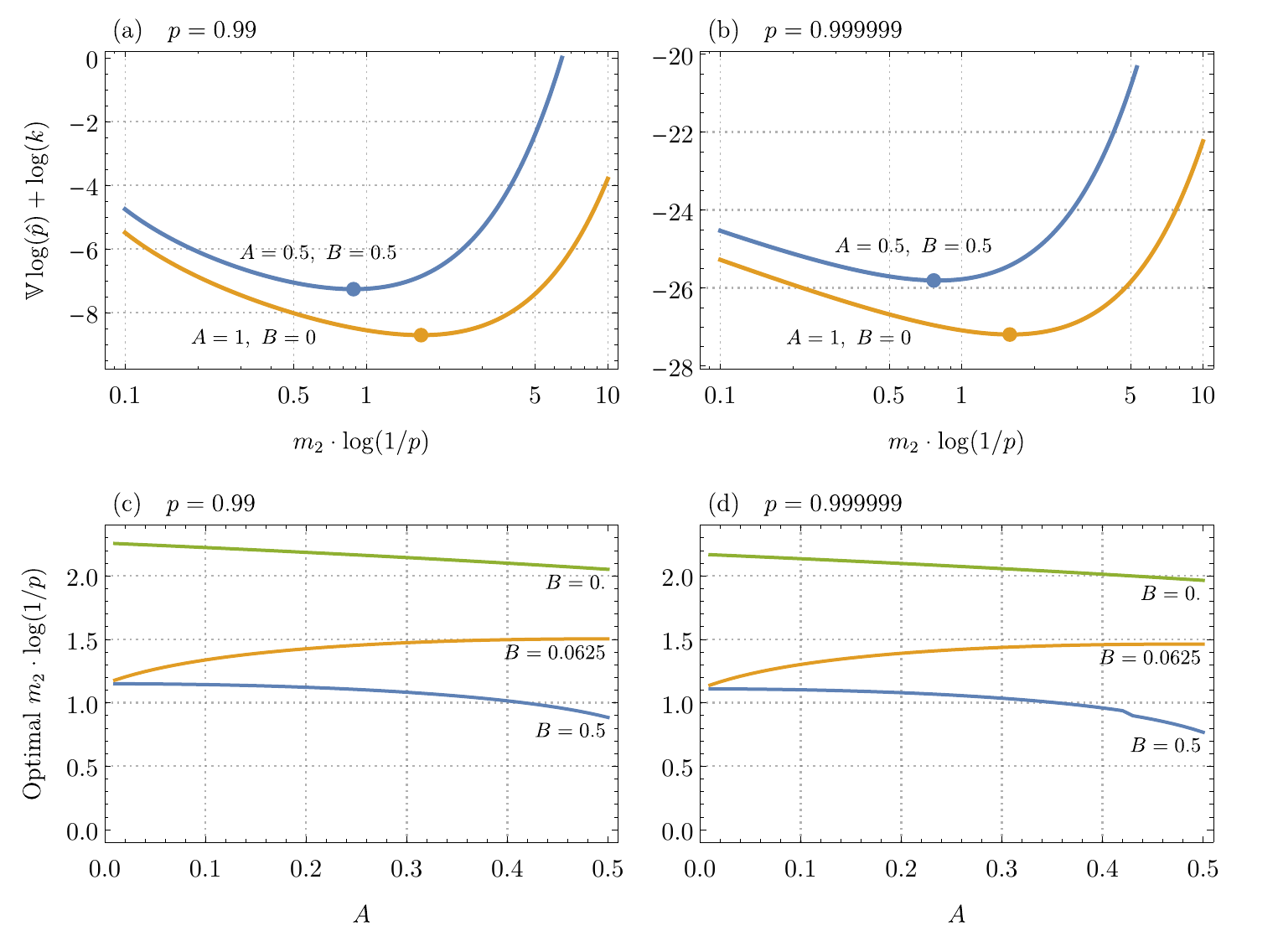}
\caption{Illustration that the variance only depends weakly on the choice of $m_2$ near the optimal value. Subfigures (a) and (b) show the variance of the estimated error rate in \cref{eq:normalARB} plotted for several combinations of $A$, $B$, and $p$, with points placed at the minimum of each curve.
Note that the $x$-axis is scaled logarithmically and such that the nominal value $m_2=1/\log(1/p)$ appears at the value $x=1$.
Subfigures (c) and (d) show the value of $m_2$ which minimizes the variance, \cref{eq:bestm2}, is plotted as a function of $A$ for several values of $B$ and $p$.
We see that the dependence on $p$ is essentially negligible for practical purposes, and that the dependence on $A$ is fairly weak.}
\label{fig:choosingm2}
\end{figure}

%-----------------------------------------
\section{Rigorous proof of multiplicative precision for the ratio estimator}\label{sec:rigproof}
%-----------------------------------------

In this section we give a rigorous proof that RB converges to an estimate with multiplicative precision using the ratio estimator in \cref{eq:LSestimator}.
Here the focus is not on obtaining tight answers, but on having a simple and clear statement of the scaling of the precision that can be achieved assuming the decay model in \cref{eq:diffDecay}, and achieving a given sample complexity.
We therefore largely neglect to track estimation errors closely, focusing instead on the simplest proof possible and a big-$O$ estimate of the resources required.

Let $r = 1-p$ and fix some small $1/16 > \epsilon_0>0$.
We are most interested in the regime where $r$ is small, or equivalently $p$ is close to 1. We assume that we can estimate the quantities $q_i = A p^{m_i}$ with an unbiased estimator $\hat{q}_i = q_i + Ap\epsilon_i$ where the estimation error $\epsilon_i$ is a random variable with zero mean.
Note that this is multiplicative precision for the case $m_i = 1$, but for larger values of $m_i$ we have just rescaled an additive precision by $A p$ for algebraic convenience.
If the estimator is the sample mean of $t$ Bernoulli random variables with mean $q_i$, then $\epsilon_i = O(1/\sqrt{t})$ with high probability.

Under these conditions, we can estimate $\hat{p}$ using the following algorithm.
\begin{description}
\item[Algorithm 1] Ratio estimator for exponential regression.
\begin{enumerate}
\item Set $i:=1$ and $m_1 := 1$.

\item Estimate $\hat{q}_1 := q_1 (1+ \epsilon_1)$ using $t$ samples.

\item While $\hat{q}_i > \frac{1}{3}\hat{q}_{1}$, Do
\begin{itemize}
	\item Set $i:=i+1$,
	\item Set $m_i := 2^i+1$,
	\item Estimate $\hat{q}_i := q_i + Ap\epsilon_i$ using $t$ samples.
\end{itemize}

\item Set $\ell := i$ and $m=2^\ell$.
\item Return $\hat{p} := \Bigl(\frac{\hat{q}_\ell}{\hat{q}_1}\Bigr)^{1/m}$ and $\hat{r} := 1-\hat{p}$.
\end{enumerate}
\end{description}

We now rigorously prove that the above algorithm returns an estimator with multiplicative precision of $\hat{r}$.

\begin{thm}
For any sufficiently small $\epsilon_0>0$, the algorithm above returns estimates $\hat{r}$ such that $|\hat{r} - r|\le O(\epsilon r)$ with probability $1-\delta$ using
\begin{align}
	M = O\biggl(\frac{1}{\epsilon^2}\log\bigl(\tfrac{1}{r}\bigr) \log\Bigl[\tfrac{1}{\delta}\log\bigl(\tfrac{1}{r}\bigr)\Bigr]\biggr)
\end{align}
measurements.
\end{thm}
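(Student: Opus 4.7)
My plan is to condition on the high-probability event that all the per-length estimation errors satisfy $|\epsilon_i|\le \epsilon$ simultaneously, reduce the subsequent analysis to a deterministic chain of inequalities, and then account for the sampling cost via a union bound.

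First I would bound the number of iterations $\ell$. Under the good event, the ratio $\hat q_i/\hat q_1 = (p^{m_i-1}+\epsilon_i)/(1+\epsilon_1)$ differs from $p^{m_i-1}$ by a factor $1+O(\epsilon)$. Consequently the exit condition $\hat q_i\le \hat q_1/3$ fires exactly when $p^{m_i-1}$ first crosses $1/3 + O(\epsilon)$: denoting the final index by $\ell$ and setting $m=2^\ell$, we have $p^m \le 1/3 + O(\epsilon)$, while at the previous index the loop did not exit, so $p^{m/2} > 1/3 - O(\epsilon)$ and hence $p^m > 1/9 - O(\epsilon)$. For any sufficiently small $\epsilon_0$ these bounds pin $p^m$ between two positive constants; taking logarithms and using $-\log p = r + O(r^2)$ then yields $m = \Theta(1/r)$ and $\ell = \Theta(\log(1/r))$.

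Next I would propagate these bounds into an estimate of $\hat p$. Because $p^m$ is bounded below by a positive constant, $\epsilon_\ell/p^m = O(\epsilon)$, so
\begin{equation*}
\hat p^m \;=\; p^m\cdot\frac{1+\epsilon_\ell/p^m}{1+\epsilon_1} \;=\; p^m\bigl(1+O(\epsilon)\bigr).
\end{equation*}
Taking the $m$-th root and applying $(1+x)^{1/m}=1+x/m+O((x/m)^2)$, which is valid since $\epsilon_0<1/16$, yields $\hat p = p(1+O(\epsilon/m)) = p + O(\epsilon r)$, using $p\le 1$ and $1/m = \Theta(r)$. Hence $|\hat r - r| = |\hat p - p| = O(\epsilon r)$, which is the claimed multiplicative precision. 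Finally I would sum the sampling cost: since each $\hat q(m_i)$ is an average of $t$ Bernoulli observations, Hoeffding's inequality gives $\Pr(|\epsilon_i|>\epsilon) \le 2\exp(-2t(Ap)^2\epsilon^2)$, where the factor $Ap$ arises from the rescaling in the hypothesis $\hat q_i = q_i + Ap\epsilon_i$. Under the standing assumption $A\gg 0$ and $p$ close to $1$, the constants $A,p$ are absorbed into the big-$O$; demanding failure probability at most $\delta/\ell_{\max}$ per step gives $t = O(\epsilon^{-2}\log(\ell_{\max}/\delta))$, and a union bound over the at most $\ell_{\max} = O(\log(1/r))$ iterations gives confidence $1-\delta$ and total measurement count $M = \ell_{\max}\cdot t = O(\epsilon^{-2}\log(1/r)\log[\log(1/r)/\delta])$.

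The main obstacle, and where I would concentrate the rigor, is that $\ell$ is itself a random stopping time determined by the very estimates we are trying to control, so a naive union bound is circular. I would break this circularity by showing unconditionally---purely from the fact that $p^{m_i-1}$ decays doubly-exponentially in $i$---that there is a deterministic cap $\ell_{\max} = O(\log(1/r))$ beyond which the stopping condition must fire even in the worst case over $|\epsilon_i|\le \epsilon$; the union bound is then performed over the fixed range $\{1,\dots,\ell_{\max}\}$. A secondary subtlety is that all the $(1+O(\epsilon))$ expansions must remain valid after the $m$-th root is taken, which is precisely the role of the ``sufficiently small $\epsilon_0$'' hypothesis; I would track the constants to ensure $p^m$ is bounded below by, say, $1/10$ so that $\epsilon_\ell/p^m$ remains safely less than one.
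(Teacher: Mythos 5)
Your proposal is correct and follows essentially the same route as the paper's proof: a Chernoff/union bound to get all $|\epsilon_i|\le\epsilon$ simultaneously, the exit/no-exit pair of inequalities pinning $p^m$ between positive constants so that $m=\Theta(1/r)$ and $\ell=\Theta(\log(1/r))$, and a Taylor expansion of the $m$-th root to convert the $1+O(\epsilon)$ ratio error into $O(\epsilon r)$ precision. If anything you are slightly more careful than the paper in flagging that $\ell$ is a data-dependent stopping time and must be capped deterministically before the union bound is applied, a point the paper's Lemma 1 glosses over.
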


The proof relies on a few simple lemmas, which we now state and prove.
\begin{lem}
\label{lemma:Chernoff}
Given a set of $\ell$ independent estimates $\hat{q}_i$ obtained from sampling each $t$ times as described above, the probability that $|\hat{q}_i-q_i| \ge A p \epsilon$ for any $i > 1$ or  $|\hat{q}_1-q_1| \ge q_1 \epsilon$ is at most $\delta$ if we choose $t = O\bigl(\frac{1}{\epsilon^2} \log \frac{\ell}{\delta}\bigr)$, where the implied constant depends on $A p$.
\end{lem}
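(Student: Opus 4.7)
The plan is a textbook application of Hoeffding's inequality together with a union bound over the $\ell$ estimators. Each $\hat{q}_i$ is an empirical mean of $t$ independent and bounded sample outcomes (lying in $[0,1]$, or in $[-1,1]$ after the $X^b$ compilation that eliminates the offset), so Hoeffding's inequality immediately yields a tail bound of the form $\Pr\bigl(|\hat{q}_i - q_i| \ge \eta\bigr) \le 2 \exp(-c\, t\, \eta^2)$ for an absolute constant $c$ that depends only on the range of the summands.

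The next step is to collapse the two conditions in the statement into a single inequality. Because $m_1 = 1$ at the initialization of Algorithm 1, we have $q_1 = A p^{m_1} = A p$, so the threshold $q_1 \epsilon$ for $i=1$ coincides with the threshold $A p \epsilon$ used for $i > 1$. Applying the Hoeffding bound with $\eta = A p \epsilon$ uniformly for every $i$, and requiring the resulting per-estimator failure probability to be at most $\delta / \ell$, gives the sample-size condition
\[
t \;\ge\; \frac{1}{c (A p)^2 \epsilon^2} \, \log \frac{2 \ell}{\delta},
\]
which matches the claimed scaling $t = O\bigl(\tfrac{1}{\epsilon^2} \log \tfrac{\ell}{\delta}\bigr)$, with the implied constant absorbing the factor $(A p)^{-2}$ as stated in the lemma.

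The final step is a union bound across the $\ell$ estimators: the probability that at least one of them violates its target deviation is at most $\ell \cdot (\delta / \ell) = \delta$, which is the desired conclusion. There is no real obstacle here beyond standard concentration bookkeeping; the only points worth double-checking are (i) the correct range of the summands so that the Hoeffding constant is explicit under whichever RB+ variant is being run, and (ii) that the $i = 1$ threshold genuinely reduces to $A p \epsilon$, so that a single application of Hoeffding with one choice of $\eta$ handles both cases. The independence of the $\hat{q}_i$ across different $i$ follows from the assumption that sequences at different $m_i$ are sampled independently, which is automatic in Algorithm 1.
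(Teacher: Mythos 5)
Your proposal is correct and follows exactly the route the paper intends: the paper's proof consists of the single sentence that the result is ``an elementary application of the Chernoff bound and the union bound'' with details omitted, and your Hoeffding-plus-union-bound argument is precisely that application, fleshed out. Your observation that $q_1 = Ap$ (since $m_1=1$) lets a single deviation threshold $Ap\epsilon$ cover both cases is a correct and useful detail that the paper leaves implicit.
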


\begin{proof}
The proof is an elementary application of the Chernoff bound and the union bound. We omit the details.
\end{proof}

Thus, we can assume that each of the random estimates $\epsilon_i$ satisfies $|\epsilon_i| \le \epsilon$ in the algorithm, and we will fail with probability at most $\delta$.
Next, we will see that the algorithm converges in a small number of steps $\ell$, and with $m$ taking a value that scales like $1/r$.

\begin{lem}
\label{lemma:pmbound}
With probability at least $1-\delta$, the above algorithm converges with
$\ell = \Theta\bigl(\log \tfrac{1}{r} \bigr)$ using
$O\bigl(\frac{\ell}{\epsilon^2} \log \frac{\ell}{\delta}\bigr)$ total samples, and with $m$ such that
\begin{align}
\label{eq:pmbound}
	\frac{(1-4\epsilon)^2}{9} < p^m \le \frac{1+4\epsilon}{3}.
\end{align}
\end{lem}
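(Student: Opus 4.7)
The plan is to condition on the good event from \cref{lemma:Chernoff} and read the two claimed bounds on $p^m$ directly off the loop's stopping condition. Because $\ell$ is itself random, I will allocate samples geometrically over iterations: take $t_i = c\epsilon^{-2}\log(i^2/\delta)$ samples at iteration $i$ for a suitable constant $c$; by \cref{lemma:Chernoff} the probability that $|\hat{q}_i - q_i|$ exceeds $Ap\epsilon$ (respectively $q_1\epsilon$ when $i=1$) at iteration $i$ is at most $\delta/i^2$, and summing over $i$ yields overall failure probability $O(\delta)$. Equivalently, one could pre-commit to a crude a-priori upper bound $\ell_{\max}$ on $\ell$ from any rough estimate of $r$ and use a constant $t$; either variant gives the same asymptotic cost.

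On the good event, I expect the upper half of \cref{eq:pmbound} to follow from the stopping inequality $\hat{q}_\ell \le \hat{q}_1/3$ via
\begin{align*}
q_\ell \;\le\; \hat{q}_\ell + Ap\epsilon \;\le\; \tfrac{1}{3}q_1(1+\epsilon) + Ap\epsilon \;=\; Ap\cdot\tfrac{1+4\epsilon}{3},
\end{align*}
together with $q_\ell = Ap^{m_\ell}$ and the identities $m_\ell = 2^\ell+1$, $m = 2^\ell$, which give $p^m \le (1+4\epsilon)/3$ after dividing by $Ap$. For the lower half I will exploit that the loop did not terminate at iteration $\ell-1$, so $\hat{q}_{\ell-1} > \hat{q}_1/3$; the symmetric computation should give $p^{2^{\ell-1}} > (1-4\epsilon)/3$, and squaring yields $p^m > (1-4\epsilon)^2/9$. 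This step tacitly requires $\ell \ge 3$ so that $\hat{q}_{\ell-1}$ is a genuine estimate rather than the trivial check $\hat{q}_1 > \hat{q}_1/3$; this is fine in the small-$r$ regime of interest.

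Given \cref{eq:pmbound}, taking logarithms and using $\log(1/p) = \Theta(r)$ for $r$ small forces $m = \Theta(1/r)$, and from $m = 2^\ell$ I read off $\ell = \log_2 m = \Theta(\log(1/r))$. The total sample complexity is then $\sum_{i=1}^{\ell} t_i = O\bigl(\ell\epsilon^{-2}\log(\ell/\delta)\bigr)$, as claimed. The only genuinely non-trivial point, and the main obstacle, is the circularity that $\ell$ depends on the $\hat{q}_i$ which are themselves determined by the sample budget; the geometric allocation of $t_i$ over $i$ absorbs the $i$-dependence into a convergent sum $\sum_i i^{-2}$ and cleanly breaks this circularity without requiring any a-priori knowledge of $r$.
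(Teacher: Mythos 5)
Your proof is correct and takes essentially the same route as the paper's: both read the two halves of \cref{eq:pmbound} off the exit condition $\hat{q}_\ell \le \hat{q}_1/3$ and the non-exit condition $\hat{q}_{\ell-1} > \hat{q}_1/3$ (squaring the latter), then take logarithms to get $\ell = \Theta(\log\tfrac{1}{r})$ and invoke \cref{lemma:Chernoff} for the sample count. Your geometric $\delta/i^2$ sample allocation (resolving the circularity that $t$ depends on the random $\ell$) and your explicit $\ell \ge 3$ caveat for the lower bound are careful refinements of points the paper glosses over, but they do not change the argument.
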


\begin{proof}
The algorithm exits the while loop when
\begin{align}
	\hat{q}_\ell
	= Ap(p^{2^\ell} + \epsilon_\ell) \le \tfrac{1}{3} \hat{q}_1
	= \frac{Ap}{3}\bigl(1 + \epsilon_1 \bigr) .
\end{align}
As the algorithm did not exit for $i = \ell -1$,
\begin{align}
	\hat{q}_{\ell-1}
	= Ap(p^{2^{\ell -1}} + \epsilon_{\ell-1}) > \tfrac{1}{3} \hat{q}_1
	= \frac{Ap}{3}\bigl(1 + \epsilon_1 \bigr) .
\end{align}
Squaring the latter inequality and then rearranging both to be in terms of $p^m$ with $m=2^\ell$, we have
\begin{align}
	\frac{(1-3\epsilon_{\ell-1}+\epsilon_1)^2}{9} < p^m \le \frac{1-3\epsilon_{\ell}+\epsilon_1}{3}.
\end{align}
Supposing that for some fixed $\epsilon>0$, $\lv\epsilon_i\rv \leq \epsilon$ for all $i$ with
with probability $1-\delta$, the claim about $p^m$ follows by taking the worst-case choices of the $\epsilon_i$.
Taking logarithms of this and using  $p=1-r$, we find that for any sufficiently small $\epsilon$ we have $\ell = \Theta\bigl[-\log\bigl(-\log (1-r)\bigr)\bigr]$.
As long as $r$ is bounded away from $1$ then $-\log(1-r) = \Theta(r)$, and this is equivalent to $\ell = \Theta\bigl(\log\tfrac{1}{r}\bigr)$.

If we sample as per \cref{lemma:Chernoff}, then $|\epsilon_i|\le \epsilon$ for all $i$ with
probability $1-\delta$. Therefore the claim about the total number of samples follows
immediately from \cref{lemma:Chernoff}.
\end{proof}

Now we are ready to prove the main theorem.
\begin{proof}[Proof of Theorem]
From the above lemmas, we know that with probability $1-\delta$ the ratio estimator converges with $m = O\bigl(\log \tfrac{1}{r}\bigr)$ and errors bounded by $\epsilon$ in the numerator and denominator.
We have the bounds $\hat{p}_- \le \hat{p} \le \hat{p}_+$, where
\begin{align}
\label{eq:phat}
	\hat{p}_\pm
    := \left(\frac{q_\ell\pm Ap\epsilon}{q_1\mp  Ap\epsilon}\right)^{1/m}
    = p  \left(\frac{1\pm\epsilon/p^m}{1\mp  \epsilon}\right)^{1/m}.
\end{align}
From the inequality in \cref{eq:pmbound}, we have
\begin{align}
	\hat{p}_+ \le p \left(1 + \frac{2 \epsilon  \left(5-4 \epsilon+8 \epsilon ^2\right)}{(1-4 \epsilon )^2 (1-\epsilon)}\right)^{1/m}
		\quad \text{and } \quad
	\hat{p}_- > p \left(1 - \frac{2 \epsilon  \left(5-4 \epsilon+8 \epsilon ^2\right)}{(1-4 \epsilon )^2 (1+\epsilon)}\right)^{1/m} .
\end{align}
Now we choose any $\epsilon_0 < 1/16$ so that the $\epsilon$ dependent terms above are $O(\epsilon)$ and the term for $\hat{p}_-$ remains less than 1.
Explicitly evaluating the $\epsilon$ dependent terms, we have the bounds
\begin{align}
	\hat{p}_+ < p \bigl(1 + O(\epsilon)\bigr)^{1/m}
		\quad \text{and } \quad
	\hat{p}_- > p \bigl(1 - O(\epsilon)\bigr)^{1/m} ,
\end{align}
where the implied constant decreases with $\epsilon_0$.
Now Taylor expanding in $1/m$ and using the result from \cref{lemma:pmbound} that $\ell = \log_2 m = \Theta\bigl(\log \tfrac{1}{r}\bigr)$, we find that
\begin{align}
	\hat{p}_+ < p\bigl(1+O(\epsilon q)\bigr) \quad \text{and} \quad \hat{p}_- > p\bigl(1-O(\epsilon r)\bigr).
\end{align}
Adopting the bounds $\hat{r}_\pm = 1-\hat{p}_\mp$ gives the analogous result for $\hat{r}$.
This establishes that the estimator has multiplicative precision,
\begin{align}
	|\hat{p} - p| = |\hat{r} - r|\le O(\epsilon r) \,.
\end{align}
The statement about complexity follows directly from the lemmas, and the theorem is proven.
\end{proof}

%-----------------------------------------
\section{Conclusion}
%-----------------------------------------

We have provided a modification to RB+ and concrete recommendations for how to obtain precise estimates of error rates in practical regimes.
We have rigorously shown that the precision is multiplicative, and our derivations and numerics demonstrate the utility of the heuristics that we use.
Our recommendations are based upon the assumption that the model in \cref{eq:decay_model} is correct.
For standard randomized benchmarking, there are only two factors that can cause a deviation from \cref{eq:decay_model} for sequence lengths $m\ge4$, namely, noise that is time-dependent or non-Markovian~\cite{Epstein2014,Wallman2014,Ball2016,Wallman2017}.
Both types of noise are ubiquitous in experiments and neither can be detected using only two sequence lengths.
A standard approach is to take data from more sequence lengths, perform a joint fit and then use the goodness-of-fit as an indicator for non-Markovian noise or drift.
However, fitting more sequence lengths is nontrivial as the data are heteroscedastic.
Furthermore, adding more sequence lengths does not significantly increase the quality of the error estimates when \cref{eq:decay_model} is correct, and so performing a joint fit provides little extra information and introduces correlations between model estimation and model validation.
We instead recommend fitting data using only two sequence lengths and then using hypothesis testing to determine if data taken at other sequence lengths are consistent with the hypothesis that the noise is static and Markovian.

\acknowledgments

RH and STF were supported by the Australian Research Council through the Centre of Excellence in Engineered Quantum Systems CE170100009.
This research was supported by the US Army Research Office through grant numbers W911NF-14-1-0098 and W911NF-14-1-0103.
IH and JJW gratefully acknowledge contributions from the Canada First Research Excellence Fund, Industry Canada, the Province of Ontario, and Quantum Benchmark Inc.

\bibliography{library}

\end{document}